\newcommand{\E}{\mathcal E}
\newcommand{\Lie}{\mathcal L}
\newcommand{\RM}{R}
\newcommand{\RS}{\mathcal R}
\newcommand{\LCM}{\nabla}
\newcommand{\LCS}{\mathcal D}
\newcommand{\SFS}{\chi}
\newcounter{example}[section]
\newcounter{remark}[section]
\newcounter{theorem}[section]
\newcounter{proposition}[section]
\newcounter{lemma}[section]
\newcounter{corollary}[section]
\newcounter{definition}[section]
\def\theremark{\arabic{section}.\arabic{remark}}
\def\thetheorem{\arabic{section}.\arabic{theorem}}
\def\thedefinition{\arabic{section}.\arabic{definition}}
\renewcommand*{\email}[1]{\footnote{Electronic address: \href{mailto:#1}{\nolinkurl{#1}} }}
\newenvironment{proof}{\noindent {\textit{Proof:}}
}{\medskip}
\newenvironment{theorem}{\refstepcounter{theorem}
\medskip\noindent{\bf Theorem \thetheorem}:\em}{\medskip}
\newenvironment{proposition}{\refstepcounter{theorem}\medskip\noindent{\bf
Proposition \thetheorem}:\em}{\medskip}
\newenvironment{definition}{\refstepcounter{definition}\medskip\noindent{\bf
Definition \thedefinition}:\em}{\medskip}
\begin{document}

\title{Uniqueness of the static vacuum asymptotically flat spacetimes with massive particle spheres}
\author{Kirill Kobialko${}^1$\email{kobyalkokv@yandex.ru}}
\author{Igor Bogush${}^2$\email{igbogush@gmail.com}}
\author{Dmitri Gal'tsov${}^1$\email{galtsov@phys.msu.ru}}
\affiliation{
${}^1$ Faculty of Physics, Moscow State University, 119899, Moscow, Russia
\\
${}^2$ Moldova State University, strada Alexei Mateevici 60, 2009, Chi\c{s}in\u{a}u, Moldova
}

\begin{abstract}
In this paper, we establish that a four-dimensional static vacuum asymptotically flat spacetime containing a massive particle sphere is isometric to the Schwarzschild spacetime. Our results expand upon existing uniqueness theorems for static vacuum asymptotically flat spacetimes, which focus on scenarios featuring event horizons or photon spheres. Similarly to the uniqueness theorems concerning photon spheres or event horizons, only a single massive particle sphere is sufficient to obtain a unique solution. However, in contrast to previous theorems, our result leads to the existence of an entire spacetime foliation sliced by a set of massive particle spheres spanning various energies.
\end{abstract}

\maketitle

\setcounter{page}{2}

\section{Introduction}

Black hole shadows offer a direct means to observe the optical characteristics of immensely strong gravitational fields. The theoretical comprehension of these shadows is closely intertwined with the photon and massive particle surfaces \cite{Perlick:2021aok,Grenzebach:2014fha,Shoom:2017ril,Cunha:2018acu,Song:2022fdg,Kobialko:2023qzo,Chen:2022scf}.

Since the inception of general relativity, it has been widely recognized that the spherically symmetric Schwarzschild solution contains a series of circular null orbits, collectively constituting a complete photon sphere due to its inherent symmetry. The profound implications of these surfaces began to crystallize in the late 1990s. The seminal work by Virbhadra and Ellis \cite{Virbhadra:1999nm} delineated the correlation between the properties of photon spheres and the intricacies of strong gravitational lensing, leading to the formal characterization of the photon sphere as a timelike hypersurface in spacetime, where the light beam's deflection angle at the closest approach distance tends towards infinity. Subsequently, Claudel, Virbhadra, and Ellis \cite{Claudel:2000yi} introduced a comprehensive definition of the general photon surface as a timelike surface wherein any null geodesic, touching it tangentially, remains entirely within it, and established a theorem linking this definition to the hypersurface's geometry. The equivalence of these definitions was demonstrated in Ref. \cite{Virbhadra:2002ju} for general static spherically symmetric metrics.

Notably, it was revealed that the intimate connection between photon spheres and strong lensing persists even in the context of naked singularities, thereby suggesting their categorization into weak and strong variants \cite{Virbhadra:2002ju,Virbhadra:2007kw}. Recently, in close correlation with photon surfaces, several significant relationships have been unveiled regarding the geometric attributes of relativistic images \cite{Virbhadra:2022iiy,Virbhadra:2024pru}, the compactness of supermassive dark objects at galactic cores \cite{Virbhadra:2022ybp}, and the impact of the cosmological constant on the photon sphere \cite{Adler:2022qtb}.

An important property of the photon surfaces is established by the theorem asserting that these are timelike totally umbilic hypersurfaces in spacetime \cite{Chen} exhibiting proportionality of their first and second fundamental forms. This purely geometric approach serves as a constructive definition for analyzing photon surfaces instead of solving geodesic equations and plays a decisive role in the analysis of the black hole uniqueness.

The first black hole uniqueness theorem was established by Israel in Ref. \cite{Israel:1967wq}. It states that the Schwarzschild solution is the only static asymptotically flat vacuum spacetime which has a nonsingular closed simply connected event horizon. Then, similar uniqueness theorem were suggested to generalize the result, e.g., the Kerr solution was shown to be the only rotating vacuum black hole by Robinson in Ref. \cite{Robinson:1975bv}. The focus on the event horizon in the uniqueness proofs was then moved to photon surfaces in work by Cederbaum \cite{Cederbaum:2014gva}, where the Schwarzschild spacetime was shown to be the only static vacuum asymptotically flat spacetime that possesses a suitably defined photon sphere. Later, the uniqueness theoremes for photon spheres were established for Einstein-scalar \cite{Yazadjiev:2015hda}, Einstein-Maxwell \cite{Yazadjiev:2015jza, Cederbaum:2015fra, Rogatko:2024nzq}, Einstein-Maxwell-dilaton \cite{Yazadjiev:2015mta,Rogatko:2016mho}, Einstein-multiple-scalar \cite{Yazadjiev:2021nfr} models, for wormholes \cite{Koga:2020gqd,Yazadjiev:2017twg}, and in higher dimensions \cite{Jahns:2019lex,Cederbaum:2024zqd}. The condition of constancy of the lapse function that was used in first papers was weakened to constancy of lapse function in each spatial slice in Ref. \cite{Cederbaum:2019rbv}. An alternative perturbative approach to considering the uniqueness of photon spheres was suggested in Ref. \cite{Yoshino:2016kgi, Yoshino:2023mcs}. Gibbons and Warnick \cite{Gibbons:2016isj} discovered that photon surfaces may exist in less symmetric spacetimes. This was extended in Ref. \cite{Koga:2020akc}, where the photon surfaces were demonstrated to be possible if the spacetime admits a nontrivial Killing tensor. This might suggest the potential challenge of formulating a uniqueness theorem solely relying on the umbilicity without constancy of the lapse function. Similar connections in stationary spacetimes was established in the Refs. \cite{Kobialko:2021aqg,Kobialko:2022ozq}.

The framework of photon spheres has been extended to encompass \textit{massive particle surfaces}, which share analogous properties for timelike geodesics associated with massive particles interacting with black holes or other ultra-compact gravitating objects \cite{Kobialko:2022uzj,Bogush:2023ojz,Bogush:2024fqj}. Although the flow of massive particles is not directly observable from distant points (except perhaps for neutrinos, whose detection remains a considerable challenge), these surfaces can be indirectly observed through their proper radiation, which may become visible under certain conditions. Moreover, the significance of massive particle surfaces lies in their relevance to photons traversing through plasma environments that may surround black holes \cite{Kobialko:2023qzo}. In environments with inhomogeneous plasma, besides the gravitational deflection of light, electromagnetic refraction also plays a role \cite{Bisnovatyi-Kogan:2010flt,Wagner:2020ihx,Fathi:2021mjc,Kumar:2023wfp}, which can be integrated into a unified lensing theory.

In this paper, we adapt proofs from Refs. \cite{Yazadjiev:2015hda,Yazadjiev:2015jza,Yazadjiev:2015mta,Yazadjiev:2021nfr} suggested by Yazadjiev et. al. Within the framework of the same assumptions (there is only one connected photon sphere and the lapse function regularly foliates spacetime), we prove the uniqueness theorem for massive particle spheres in static vacuum asymptotically flat spacetimes. In contrast to previous uniqueness theorems, our result leads to the existence of an entire spacetime foliation sliced by a set of massive particle spheres spanning various energies.

\section{Massive particle sphere}

\setcounter{equation}{0}
\setcounter{subsection}{0}

We start with considering a four-dimensional static vacuum asymptotically flat spacetime $\mathcal{M}$ with given ADM mass $M>0$ and Levi-Civita connection $\LCM_\alpha$. In a static spacetime, there is a timelike Killing vector field $k^{\alpha}=\alpha m^\alpha$, where $\alpha>0$ is a lapse function and $m^\alpha$ is a future-directed timelike unit vector ($k^\alpha k_\alpha = -\alpha^2$). One can define spatial slices $\Sigma$ orthogonal to $k^\alpha$. The induced metric on $\Sigma$ is $\bar{g}_{\alpha\beta}=g_{\alpha\beta} + m_\alpha m_\beta$ which defines the corresponding Levi-Civita connection $\bar{\LCM}_\alpha$. Here and further bars will denote quantities associated with the slice $\Sigma$. Vacuum Einstein equations $R_{\alpha\beta}=0$ after dimensional reduction along $k^\alpha$ read \cite{Cederbaum:2014gva,Yazadjiev:2015hda}
\begin{align}\label{eq:Einstein}
\bar{R}{}_{\alpha\beta} =\alpha^{-1}  \bar{\LCM}_\alpha \bar{\LCM}_\beta\alpha, \quad \bar{\LCM}_\alpha \bar{\LCM}^\alpha\alpha =0.
\end{align}
Since we are interested in asymptotically flat spacetimes, the lapse function and the metric have the following asymptotics for $r\to\infty$ \cite{Cederbaum:2014gva,Yazadjiev:2015hda}:
\begin{align} \label{eq:flat}
\alpha=1-\frac{M}{r}+O\left(r^{-2}\right), \quad  g_{\alpha\beta}=\eta_{\alpha\beta} + O\left(r^{-1}\right),
\end{align}
where $r$ is a suitable radial coordinate and $\eta_{\alpha\beta}$ is a flat Minkowski four-dimensional metric. For vacuum spacetime with a time-like Killing vector $k^{\alpha}$, there is an alternative in determination of the ADM mass of the solution through the Komar integral \cite{Komar:1963svp}
\begin{align} \label{eq:Komar}
M = -\frac{1}{8\pi}\int_{\bar{S}}  \LCM^\alpha k^\beta dS_{\alpha\beta}, \quad dS_{\alpha\beta}=n_{[\alpha} m_{\beta]}d\bar{S},
\end{align}
where $\bar{S}$ is an arbitrary closed two-dimensional surface in $\Sigma$ with an outer normal vector $n^\alpha$ (vector $n^\alpha$ lies in the tangent space of slice $\Sigma$) and $d\bar{S}$ is a volume form associated with the induced metric on $\bar{S}$.

The definition of a massive particle surface for static vacuum spacetimes can be formulated as follows \cite{Kobialko:2022uzj}: 

\begin{definition} 
A massive particle surface is a timelike hypersurface $S$ of $\mathcal{M}$ such that, for every point $p\in S$ and every vector $v^\alpha|_p \in T_pS$ such that $v^\alpha  k_\alpha|_p=-\E$ and $v^\alpha v_\alpha|_p=-m^2$, there exists a geodesic $\gamma$ of $\mathcal{M}$ for a particle with mass $m$ and energy $\E$ such that $\dot{\gamma}^\alpha(0) =v^\alpha|_p$ and $\gamma\subset S$.
\end{definition} 

We omit the charge from the definition in Ref. \cite{Kobialko:2022uzj} since the solution is vacuum and there is no electromagnetic force acting on particles. Here, we focus on \textit{static massive particle surface}, which are additionally tangent to the timelike Killing vector $k^\alpha$. 

In other words, the definition states that any geodesic of a particle with mass $m$ and energy $\E$ initially tangent to the corresponding massive particle surface $S$ will remain tangent to $S$. For static massive particle surface $S$ with normal $n^\alpha$ the first and second fundamental forms read as \cite{Kobialko:2022uzj}
\begin{align}\label{eq:h_chi}
   h_{\alpha\beta}=g_{\alpha\beta}-n_\alpha n_\beta, \quad
    \chi_{\alpha\beta} = H\left(
          h_{\alpha\beta}
        + \frac{m^2}{ \E^2} k_{\alpha} k_{\beta}
    \right),
\end{align}
where $H$ is some scalar function on $S$ and $\LCS_\alpha$ is a Levi-Civita connection in $S$. Since the Killing vector field $k^\alpha$ is tangent to the hypersurface $S$ everywhere ($k^\alpha n_\alpha =0$), the following Lie derivatives are equal to zero (see App. \ref{AppA}):
\begin{align}\label{eq:Lie}
    \Lie_k n_\alpha =0, \quad
    \Lie_k h_{\alpha\beta}=0, \quad
    \Lie_k  \chi_{\alpha\beta}=0, \quad
    \Lie_k H = 0.
\end{align}
If $\bar{S}$ is a spatial section of a surface $S$ sliced by $\Sigma$, from general geometric considerations we have \cite{Yoshino:2017gqv}:
\begin{align}\label{eq:chi_bar}
   \chi_{\alpha\beta} = \bar{\chi}_{\alpha\beta} - m_\alpha m_\beta \cdot n^\alpha  \LCM_\alpha \ln \alpha, \quad
   h_{\alpha\beta} = \bar{h}_{\alpha\beta} - m_\alpha m_\beta.
\end{align}
Comparing (\ref{eq:h_chi}) and (\ref{eq:chi_bar}), we find 
\begin{align}
    \bar{\chi}_{\alpha\beta} =
      H \bar{h}_{\alpha\beta}
    +  \alpha^{-2} k_\alpha k_\beta \left(
        n^\alpha  \LCM_\alpha \ln \alpha
        - H \left(1 - \frac{\alpha^{2} m^2}{\E^2}\right)
    \right),
\end{align}
and since $\bar{\chi}_{\alpha\beta}$ is tangent to the spatial section and $k_\alpha$ is orthogonal to it, we find the following expressions:
\begin{align}\label{eq:chi_s}
\bar{\chi}_{\alpha\beta}  &=H \bar{h}_{\alpha\beta}, \quad
  n^\alpha \LCM_\alpha \ln \alpha = H\left(1
        -\frac{\alpha^2m^2}{\E^2} 
    \right).
\end{align}
The spatial section of static massive particle hypersurface is a totally umbilical surface with a spatial mean curvature $H=\frac{1}{2}\bar{\chi}_{\alpha}{}^\alpha$. However, unlike the photon sphere \cite{Claudel:2000yi}, the principal curvature in the time direction is different from the spatial ones. In what follows we also assume that the spatial section is connected, compact and closed.  

Equations (\ref{eq:Lie}), (\ref{eq:chi_s}) and Refs. \cite{Cederbaum:2014gva,Cederbaum:2015fra,Yazadjiev:2017twg,Yazadjiev:2015hda,Yazadjiev:2015jza,Yazadjiev:2015mta,Yazadjiev:2021nfr} inspire us to introduce two important definitions -- a massive particle \textit{sphere} and a \textit{non-extremal} massive particle surface.

\begin{definition}
The massive particle surface $S$ is a massive particle sphere if and only if
$\LCS_\alpha \alpha =0$ on $S$.
\end{definition}

\begin{definition}
The massive particle surface $S$ is a non-extremal massive particle surface if and only if $m^2\alpha^2 /\E^2<1$ on $S$.
\end{definition}

Further we will be interested in \textit{non-extreme massive particle spheres} only. The latter has a number of important geometric properties. First, it has a constant spatial mean curvature, i.e. $\LCS_\alpha H = 0$. Indeed, consider the Codazzi equation \cite{Chen} in spacetime
\begin{align}
    0
    =
    & n^\rho h^\sigma{}_\alpha \RM_\rho{}_{\sigma}
    =
    \LCS_\beta \SFS_{\alpha}{}^\beta-\LCS_\alpha \SFS_{\beta}{}^\beta=
    \\\nonumber = &
      \frac{m^2}{ \E^2} k_{\alpha} \Lie_k H 
    + \frac{Hm^2}{ \E^2}\LCS_\beta\left(
            k_{\alpha} k^{\beta}
    \right)
    + \LCS_\alpha H
    -\left(
          3 - \frac{m^2\alpha^2}{ \E^2} 
    \right) \LCS_\alpha H
    \\\nonumber = &
    -\left(
          2
        - \frac{m^2\alpha^2}{ \E^2} 
    \right)\LCS_\alpha H.
\end{align}
Here, we used Eq. (\ref{eq:h_chi}) to rewrite the second fundamental form $\chi_{\alpha\beta}$. Then, we used Eq. (\ref{eq:Lie}) and 
\begin{align}
\LCS_\beta\left(
            k_{\alpha} k^{\beta}
    \right)=\frac{1}{2}\LCS_\alpha
            \alpha^2
    +
            k_{\alpha}\cdot \LCS_\beta k^{\beta}=0,
\end{align}
to get rid of the first two terms in the second line. Also, we used the condition $\LCS_\alpha \alpha =0$ and Killing equations $\LCS_{(\alpha} k_{\beta)}=0$. Since we consider a non-extremal sphere $m^2 \alpha^2 /\E^2<1$, the term in brackets $(2- m^2\alpha^2 / \E^2)$ is non-zero, and we get $\LCS_\alpha H=0$. 

Since we have proven that $H$ is constant at the sphere, this allows us to obtain useful geometric identities. First, consider the Komar integral (\ref{eq:Komar}) over the spatial sections $\bar{S}$ of a massive particle sphere:
\begin{align} \label{eq:Mk}
M &= -\frac{1}{8\pi}\int_{\bar{S}}  \LCM^\alpha k^\beta dS_{\alpha\beta}=  -\frac{1}{8\pi}\int_{\bar{S}}  \LCM^\alpha k^\beta n_{[\alpha} m_{\beta]}d\bar{S} \nonumber\\&=
 \frac{1}{4\pi}\int_{\bar{S}}n^\alpha\LCM_\alpha \alpha d\bar{S}=\frac{1}{4\pi}\int_{\bar{S}} \alpha H\left(1-\frac{\alpha^2m^2}{\E^2} \right)d\bar{S}.
\end{align}
Since the integrand expression is constant, there is an algebraic relation between the mass $M$ and the spatial section area of the massive particle sphere $A_S$:
\begin{align} \label{eq:ADM}
4 \pi M  = \alpha H\left(1-\frac{\alpha^2m^2}{\E^2} \right) A_S, \qquad
A_S=\int_{\bar{S}}d\bar{S}.
\end{align}
Particularly, sphere $\bar{S}$ has a positive constant mean curvature $H>0$ if a physical assumption of positive mass $M$ is taken into consideration. By virtue of Eq. (\ref{eq:chi_s}), this means that on the sphere $n^\alpha \LCM_\alpha \alpha >0$, i.e., the norm of spatial gradient $\bar{\LCM}_\alpha \alpha$ does not vanish anywhere on $\bar{S}$.

Consider an outer space region $\Sigma_{\text{ext}}$ outside the massive particle sphere $\bar{S}$ or equivalently a spacetime region $\mathcal{M}_{\text{ext}}$ outside $S$. In this case, the massive particle sphere is an inner boundary $\partial \mathcal{M}_{\text{ext}}$. Similarly to Ref. \cite{Yazadjiev:2015hda}, we introduce an additional assumption that $\alpha={const}$ regularly foliate the manifold $\mathcal{M}_{\text{ext}}$. It is worth noting that the condition for the existence of a regular foliation is technical and, in principle, open to relaxation, as discussed in Ref. \cite{Cederbaum:2015fra}. 

By definition, the function $\alpha$ is constant at the massive particle sphere. As we will show, the massive particle sphere $\bar{S}$ has a topology of a sphere. Given the regularity of the foliation, any slice in the outer region $\Sigma_{\text{ext}}$ is a topological sphere as well. Equations of motion (\ref{eq:Einstein}) necessitate $\alpha$ to be a harmonic function, while the boundary conditions at asymptotics dictate that $\alpha$ must approach 1 as it tends to infinity. Following the maximum principle for the harmonic functions, $\alpha$ monotonically increases to 1 moving from the sphere $\bar{S}$ to infinity along the flow of slices, i.e., $0<\alpha<1$.

The second key identity can be obtained from the Gauss-Bonnet theorem. The trace of the Gauss equations gives an expression for the scalar curvature $\bar{\RS}$ of the spatial section $\bar{S}$ the following (see Eq. (C13) in Ref. \cite{Yoshino:2017gqv}, keeping in mind that $R_{\alpha\beta}=0$, $\LCS_\alpha \alpha =0$ and (\ref{eq:chi_s}))
\begin{align} \label{eq:Gauss}
\bar{\RS}&=\bar{\SFS}^2 - \bar{\SFS}_{\alpha\beta}\bar{\SFS}^{\alpha\beta} +2 \bar{\SFS}_\beta{}^\beta n^\alpha \LCM_\alpha \ln \alpha  
=
4\left(\frac{3}{2}-\frac{\alpha^2m^2}{\E^2} \right)H^2.
\end{align}
Therefore, the non-extremal ($m^2\alpha^2 /\E^2<1$) sphere $\bar{S}$ has a constant and positive scalar curvature $\bar{\RS}>0$, representing a round
sphere \cite{Yazadjiev:2015hda}. Then, integrating (\ref{eq:Gauss}) over $\bar{S}$ and applying Gauss-Bonnet theorem $\int_{\bar{S}} \bar{\RS}d\bar{S} = 8\pi$, we find the second useful identity:
\begin{equation} \label{eq:gauss_bonnet}
2\pi=\left(\frac{3}{2}-\frac{\alpha^2m^2}{\E^2} \right)H^2 A_S.
\end{equation}
Dividing the equation (\ref{eq:gauss_bonnet}) by (\ref{eq:ADM}), the following algebraic connection between the mean curvature $H$ and the lapse function $\alpha$ on $S$ can be found:
\begin{equation} \label{eq:H}
H=\frac{\alpha}{M}\cdot\frac{1-\alpha^2 m^2/\E^2}{3 -2 \alpha ^2 m^2/\E^2}.
\end{equation}

\section{Uniqueness theorem}

Having completed all the preparations, we are ready to formulate and prove the main result of this article.

\begin{theorem}
Let $\mathcal{M}_{\text{ext}}$ be a four-dimensional static and asymptotically flat spacetime with given ADM mass $M>0$, satisfying the vacuum Einstein equations $R_{\alpha\beta}=0$ and possessing a non-extremal massive particle sphere as an inner boundary of $\mathcal{M}_{\text{ext}}$. Assume that the lapse function $\alpha$ regularly foliates $\mathcal{M}_{\text{ext}}$. Then, $\mathcal{M}_{\text{ext}}$ is an isometric to the Schwarzschild spacetime with mass $M$, and the area radius $r_S=\sqrt{A_S/4\pi}$ of the massive particle sphere satisfies the equation ${\E^2/m^2=(r_S-2M)^2/(r^2_S-3Mr_S)}$.
\end{theorem}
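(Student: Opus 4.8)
The plan is to translate the statement into three-dimensional Riemannian geometry on the exterior slice and then close the argument with the rigidity case of the positive mass theorem, in the spirit of the Bunting--Masood-ul-Alam conformal method used by Yazadjiev et al. First I would record the data on $(\Sigma_{\text{ext}},\bar g)$: by the reduced field equations (\ref{eq:Einstein}) the lapse is harmonic, $\bar\LCM_\alpha\bar\LCM^\alpha\alpha=0$, and tracing the first equation shows the slice is scalar flat, $\bar R=\alpha^{-1}\bar\LCM_\alpha\bar\LCM^\alpha\alpha=0$. Asymptotic flatness (\ref{eq:flat}) fixes $\alpha\to1$ and the ADM mass $M$, the maximum principle gives $0<\alpha<1$ with a regular foliation by the level sets $\{\alpha=\mathrm{const}\}$, and the inner boundary $\bar S$ is (as shown above) a round sphere of constant mean curvature $H>0$ carrying the data $\alpha=\alpha_0$ and $n^\alpha\LCM_\alpha\ln\alpha=H(1-\alpha_0^2m^2/\E^2)$ from (\ref{eq:chi_s}), together with the algebraic relation (\ref{eq:H}). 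Everything here coincides with the photon-sphere setting except for this boundary data, which is the sole place the mass $m$ enters.

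The core of the argument is to manufacture from $\bar g$ a complete, scalar-flat, asymptotically flat manifold of vanishing mass. Because $\alpha$ is harmonic, any affine function $\Omega=a+b\alpha$ is harmonic too, so in three dimensions the conformally rescaled metrics $\bar g_\pm=\Omega_\pm^4\,\bar g$ with $\Omega_\pm$ linear in $\alpha$ are again scalar flat. I would fix the four constants so that (i) $\Omega_+\to1$ at infinity, making $(\Sigma_{\text{ext}},\bar g_+)$ asymptotically flat; (ii) $\Omega_-\to0$ at infinity, so that the asymptotic end of $(\Sigma_{\text{ext}},\bar g_-)$ closes up to a single regular point; and (iii) at $\bar S$ the two induced metrics coincide, $\Omega_+(\alpha_0)=\Omega_-(\alpha_0)$, while the mean curvatures computed in $\bar g_+$ and $\bar g_-$ are opposite. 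The conformal transformation law $k_\pm=\Omega_\pm^{-2}(2H+4\Omega_\pm^{-1}\,n^\alpha\LCM_\alpha\Omega_\pm)$ turns (iii) into explicit conditions on $a_\pm,b_\pm$ that are solvable precisely by virtue of the boundary relations. Gluing the two pieces across $\bar S$ then yields a complete, asymptotically flat, scalar-flat manifold $\hat M$ whose scalar curvature is nonnegative in the distributional sense across the seam.

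With $\hat M$ assembled, the conclusion follows from the rigidity statement of the positive mass theorem: a complete, asymptotically flat three-manifold with nonnegative scalar curvature and vanishing ADM mass is isometric to Euclidean $\mathbb{R}^3$. Unwinding the conformal factor forces $(\Sigma_{\text{ext}},\bar g)$ to be the spatial Schwarzschild metric, and since $\alpha$ is the associated harmonic lapse, the static reconstruction reproduces the Schwarzschild spacetime of mass $M$. The hard part will be the rigidity input: unlike an event horizon, which sits at $\alpha=0$ where the symmetric factors $(1\pm\alpha)/2$ automatically match and $\bar S$ is minimal, the massive particle sphere lives at $\alpha_0>0$, so the conformal factors must be genuinely adapted to the boundary data, and one must verify that the matching conditions are compatible with the vanishing of the mass of $\hat M$. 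This is exactly where the constancy of $\alpha$ and $H$ on $\bar S$, the relation (\ref{eq:H}), and the non-extremality bound $m^2\alpha^2/\E^2<1$ (which secures $H>0$ and $n^\alpha\LCM_\alpha\alpha>0$, hence $\Omega_\pm>0$ and the correct sign of the seam curvature) become indispensable; without them the glued metric would develop a curvature singularity of the wrong sign and the positive-mass argument would break down.

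Finally, the area-radius relation is a direct check on the resulting solution. On Schwarzschild one has $\alpha^2=1-2M/r$, and on the leaf $r=r_S$ the round-sphere data read $H=r_S^{-1}\sqrt{1-2M/r_S}$ and $n^\alpha\LCM_\alpha\alpha=M/r_S^2$. Substituting these into the massive-particle-sphere identity (\ref{eq:chi_s}), $n^\alpha\LCM_\alpha\ln\alpha=H(1-\alpha^2m^2/\E^2)$, gives
\begin{align}
\frac{M}{r_S-2M}=1-\left(1-\frac{2M}{r_S}\right)\frac{m^2}{\E^2},
\end{align}
which rearranges to $\E^2/m^2=(r_S-2M)^2/(r_S^2-3Mr_S)$ with $r_S=\sqrt{A_S/4\pi}$, as claimed.
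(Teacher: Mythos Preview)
Your proposal follows a genuinely different route from the paper. The paper proceeds in the Israel--Yazadjiev style: after the conformal change $\tilde g=\alpha^{2}\bar g$ one writes two divergence identities whose right-hand sides are nonnegative multiples of the squared Cotton tensor; integrating and evaluating the boundary terms on $\bar S$ and at infinity yields the pair of opposing inequalities (\ref{eq:inequality_1}) and (\ref{eq:inequality_2}), forcing $3\alpha^{2}-1-2\alpha^{4}m^{2}/\E^{2}=0$ and hence $\tilde R_{\alpha\beta\gamma}=0$. Conformal flatness then makes every $\alpha$-level set totally umbilic with constant data, giving spherical symmetry, and Birkhoff finishes. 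Your plan is instead the Bunting--Masood-ul-Alam scheme (closer to Cederbaum and Cederbaum--Galloway than to Yazadjiev, incidentally): form two conformal copies $\Omega_\pm^{4}\bar g$ with $\Omega_\pm$ affine in $\alpha$, glue along $\bar S$ into a complete zero-mass manifold with nonnegative scalar curvature, and invoke the rigidity case of the positive mass theorem.

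The gap is precisely at the step you label ``the hard part''. Requiring $\Omega_+\to1$ together with vanishing ADM mass of the glued end forces $\Omega_+=(1+\alpha)/2$; then $\Omega_-\to0$ and metric matching at $\bar S$ fix $\Omega_-$ completely, leaving no free parameters. The remaining condition---that the distributional scalar curvature across the seam be nonnegative (equivalently, that the mean-curvature jump have the correct sign for the positive mass theorem with corners)---reduces, via (\ref{eq:chi_s}), to
\[
3\alpha_0^{2}-1-\frac{2\alpha_0^{4}m^{2}}{\E^{2}}\ \geq\ 0.
\]
This is exactly one half of the paper's pair of inequalities, and it is \emph{not} a consequence of (\ref{eq:ADM}), (\ref{eq:gauss_bonnet}), (\ref{eq:H}) alone: those identities determine $H$ and $A_S$ from $M,\alpha_0,\kappa=1-\alpha_0^{2}m^{2}/\E^{2}$ but impose no relation among $M,\alpha_0,\kappa$, so $\alpha_0^{2}(1+2\kappa)$ is a priori unconstrained. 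In the paper's argument this inequality is supplied by a global divergence identity (the left inequality in (\ref{eq:neq})), while the companion one forces equality; your outline has no substitute for this input. If instead you impose exact $C^{1}$ matching of mean curvatures and compute the resulting mass, the sign of that mass again hinges on the same undetermined quantity. A viable alternative is Cederbaum's original device: use (\ref{eq:ADM}), (\ref{eq:gauss_bonnet}), (\ref{eq:H}) to select the unique Schwarzschild neck whose massive-particle-sphere data matches $\bar S$, glue that neck (down to its horizon) inside $\Sigma_{\text{ext}}$, and then run the standard doubling across the horizon---but that is a different construction from the one you sketch. Your final derivation of $\E^{2}/m^{2}=(r_S-2M)^{2}/(r_S^{2}-3Mr_S)$ on Schwarzschild is correct.
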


\begin{proof}
The proof is based on a modification of the proof presented in Refs. \cite{Yazadjiev:2015hda,Yazadjiev:2015jza,Yazadjiev:2015mta,Yazadjiev:2021nfr} for the case of photon spheres. The main problem is to prove the spherical symmetry of the spacetime $\mathcal{M}_{\text{ext}}$. First, let us perform a Weyl transformation $\tilde{g}_{\alpha\beta}=\alpha^{2}\bar{g}_{\alpha\beta}$. In this case, Eq. (\ref{eq:Einstein}) turns into  
\begin{align} \label{eq:Einstein_Weyl}
\tilde{R}{}_{\alpha\beta} =2\tilde{\LCM}_\alpha \ln\alpha \cdot\tilde{\LCM}_\beta\ln\alpha, \quad \tilde{\LCM}_\alpha \tilde{\LCM}^\alpha\ln\alpha=0,
\end{align}
where $\tilde{\LCM}$ and $\tilde{R}{}_{\alpha\beta}$ are the Levi-Civita connection and the Ricci tensor for $\tilde{g}_{\alpha\beta}$. Our goal is to show that metric $\tilde{g}_{\alpha\beta}$ is conformally flat. For this purpose, one can use the Cotton tensor \cite{Garcia:2003bw} over a 3-dimensional Riemannian manifold which is defined by
\begin{align}
\tilde{R}_{\alpha\beta\gamma}= \tilde{\LCM}_{[\alpha} \tilde{R}_{\beta]\gamma} - \frac{1}{4} \tilde{\LCM}_{[\alpha} \tilde{R}  \tilde{g}_{\beta]\gamma}. 
\end{align}
Using Eq. (\ref{eq:Einstein_Weyl}), the following divergences can be obtained \cite{Yazadjiev:2015hda}:
\begin{subequations} 
\begin{align}
\tilde{\LCM}_{\alpha} \left(\Omega^{-1}\tilde{\LCM}^{\alpha}\omega\right)&=\frac{1}{16}\omega^{-7} \Omega^3 \tilde{R}_{\alpha\beta\gamma}\tilde{R}^{\alpha\beta\gamma}, \\
\tilde{\LCM}_{\alpha} \left(\Omega^{-1}\left(U\tilde{\LCM}^{\alpha}\omega-\omega \tilde{\LCM}^{\alpha} U\right)\right)&=\frac{1}{16}H\omega^{-7} \Omega^3 \tilde{R}_{\alpha\beta\gamma}\tilde{R}^{\alpha\beta\gamma},
\end{align}
\end{subequations} 
where
\begin{align}
\omega=\left(\tilde{\LCM}_{\alpha}U\tilde{\LCM}^{\alpha}U\right)^{1/4}, \quad U = \frac{1-\alpha^2}{1+\alpha^2}, \quad \Omega=\frac{4\alpha^2}{(1+\alpha^2)^2}.
\end{align}
Since we have shown that $0<\alpha<1$, we also have $0<U<1$. Which after integration over the entire spatial slice $\Sigma$ lead to the following two inequalities (equality if and only if $\tilde{R}_{\alpha\beta\gamma}=0$) 
\begin{align}  \label{eq:neq_old}
\int_{\Sigma}\tilde{\LCM}_{\alpha} \left(\Omega^{-1}\tilde{\LCM}^{\alpha}\omega\right)d\tilde{\Sigma}\geq\int_{\Sigma} \tilde{\LCM}_{\alpha} \left(\Omega^{-1}\left(U\tilde{\LCM}^{\alpha}\omega-\omega \tilde{\LCM}^{\alpha} U\right)\right) d\tilde{\Sigma}\geq 0,
\end{align}
where $d\tilde{\Sigma}$ volume form associated with metric $\tilde{g}_{\alpha\beta}$.
Let us now apply Stokes’ theorem to them using massive particle sphere $\bar{S}$ and asymptotic sphere $\bar{S}_\infty$ as boundary surfaces: 
\begin{align} \label{eq:neq}
\int_{\bar{S}_\infty-\bar{S}} \alpha\Omega^{-1}n^{\alpha}\bar{\LCM}_{\alpha}\omega d\bar{S}\geq\int_{\bar{S}_\infty-\bar{S}} \alpha \Omega^{-1}\left(Un^{\alpha}\bar{\LCM}_{\alpha}\omega-\omega n^{\alpha}\bar{\LCM}_{\alpha} U\right) d\bar{S}\geq 0,
\end{align}
where we used $d\tilde{S}=\alpha^2d\bar{S}$ and $\tilde{n}^{\alpha}=\alpha^{-1}n^{\alpha}$, and $-\bar{S}$ means that the orientation of the normal to the inner boundary is opposite to the foliation. Given asymptotics  (\ref{eq:flat}), each surface term reads (see App. \ref{AppB} for some details)
\begin{subequations}\label{eq:int}
\begin{align}
&
\int_{\bar{S}_\infty} \alpha\Omega^{-1}n^{\alpha}\bar{\LCM}_{\alpha}\omega d\bar{S}=-4 \pi \sqrt{M},
\\&
\int_{\bar{S}_\infty} \alpha \Omega^{-1}\left(Un^{\alpha}\bar{\LCM}_{\alpha}\omega-\omega n^{\alpha}\bar{\LCM}_{\alpha} U\right) d\bar{S}=0,
\\&
\int_{-\bar{S}} \alpha\Omega^{-1}n^{\alpha}\bar{\LCM}_{\alpha}\omega d\bar{S}=\frac{A_S}{2}\sqrt{\frac{H^3}{\alpha} \left(1-\frac{\alpha ^2 m^2}{\E^2}\right)} \left(\left(1+3 \alpha ^2\right)-\frac{2 \alpha ^4 m^2}{\E^2}\right), 
\\&
\int_{-\bar{S}} \alpha \Omega^{-1}\left(Un^{\alpha}\bar{\LCM}_{\alpha}\omega-\omega n^{\alpha}\bar{\LCM}_{\alpha} U\right) d\bar{S}=\frac{A_S}{2}\sqrt{\frac{H^3}{\alpha} \left(1-\frac{\alpha ^2 m^2}{\E^2}\right)} \left(\left(1-3 \alpha ^2\right)+\frac{2 \alpha ^4 m^2}{\E^2}\right).
\end{align} 
\end{subequations}
Then, the right inequality in (\ref{eq:neq}) immediately results in
\begin{align} \label{eq:inequality_1}
3 \alpha ^2-1-\frac{2\alpha^4 m^2}{\E^2}\leq 0.
\end{align}
The left inequality in (\ref{eq:neq}) gives 
\begin{align}
4 \pi \sqrt{M} -A_S\sqrt{\alpha^3H^3 \left(1-\frac{\alpha^2 m^2}{\E^2}\right)} \left(3 -\frac{2 \alpha^2 m^2}{\E^2}\right)\leq 0,
\end{align}
which can be transformed using Eqs. (\ref{eq:gauss_bonnet}) and (\ref{eq:H}) into the following expression
\begin{align} \label{eq:inequality_2}
3 \alpha ^2-1-\frac{2\alpha^4 m^2}{\E^2}\geq 0,
\end{align}
where we took into account non-extrimality $m^2\alpha^2 /\E^2<1$. The inequalities (\ref{eq:inequality_1}) and (\ref{eq:inequality_2}) are compatible if and only if they degenerate into equalities
\begin{align} \label{eq:equality}
3 \alpha ^2-1-\frac{2\alpha^4 m^2}{\E^2}=0.
\end{align}
On the other hand, inequalities can degenerate into equalities if and only if the Cotton tensor vanishes $\tilde{R}_{\alpha\beta\gamma}=0$. For a three-dimensional Riemannian manifold, this is a necessary and sufficient condition for the metric $\tilde{g}_{\alpha\beta}$ to be conformally flat \cite{Garcia:2003bw}. Hence, the metric $\bar{g}_{\alpha\beta}$ is also conformally flat and $\bar{R}_{\alpha\beta\gamma}=0$. In particular, we have the identity \cite{Yazadjiev:2015hda}
\begin{align}
0=\bar{R}_{\alpha\beta\gamma}\bar{R}^{\alpha\beta\gamma}=\frac{8}{\alpha^4\varphi^4}\left(\left({}^\alpha\bar{\SFS}_{\alpha\beta}-\frac{{}^\alpha\bar{\SFS}}{2}\cdot{}^\alpha\bar{h}_{\alpha\beta}\right)\left({}^\alpha\bar{\SFS}^{\alpha\beta}-\frac{{}^\alpha\bar{\SFS}}{2}\cdot{}^\alpha\bar{h}^{\alpha\beta}\right) + \frac{1}{2\varphi^2} {}^\alpha\bar{h}^{\alpha\beta} \partial_\alpha\varphi \partial_\beta\varphi \right), 
\end{align}
where ${}^\alpha\bar{\SFS}_{\alpha\beta}$, ${}^\alpha\bar{h}_{\alpha\beta}$ and  $\varphi^{-1}=n^\alpha\LCM_{\alpha} \alpha$ are the induced metric, the second fundamental forms and the lapse function of slices $\alpha=\text{const}$ respectively, and the trace is denoted as ${}^\alpha\bar{\SFS}={}^\alpha\bar{\SFS}_{\alpha}{}^\alpha$. Since the induced metric possesses the Euclidean signature, each square in the brackets is equal to zero, yielding the following expressions:
\begin{align} \label{eq:hhh}
{}^\alpha\bar{\SFS}_{\alpha\beta}=\frac{{}^\alpha\bar{\SFS}}{2}\cdot{}^\alpha\bar{h}_{\alpha\beta}, \quad \bar{\LCS}_\alpha\varphi=0.
\end{align}

Thus, all slices are totally umbilic and the lapse function is constant on them. As in the case of photon spheres \cite{Yazadjiev:2015hda}, this implies that all slices of the foliation $\alpha=\text{const}$ have constant mean and scalar curvatures, i.e. slices are round spheres. As a result, the entire spacetime $\mathcal{M}_{\text{ext}}$ is spherically symmetric and therefore isometric to the Schwarzschild vacuum asymptotically flat spacetime (due to Birkhoff’s theorem). In particular, resolving Eqs. (\ref{eq:ADM}), (\ref{eq:gauss_bonnet}), (\ref{eq:equality}) and introducing the area radius
\begin{align}
r_S=\sqrt{\frac{A_S}{4\pi}}, 
\end{align}
we get standard expressions for the radius of the massive particles sphere \cite{Kobialko:2022uzj} and lapse functions in the Schwarzschild spacetime ($m\neq0$)
\begin{align}\label{eq:mps}
\frac{\E^2}{m^2}=\frac{(r_S-2M)^2}{r_S(r_S-3M)}, \quad \alpha^2=1-\frac{2M}{r_S}, \quad H^2=\frac{1}{r^2_S}\left(1-\frac{2M}{r_S}\right).
\end{align}
This completes the proof of the theorem. 
\end{proof}

On the one hand, substitution of Eq. (\ref{eq:mps}) into the condition of nonextremality $m^2\alpha^2 /\E^2<1$ results in ${M/(r_S-2M)> 0}$, i.e. holds outside the horizon. On the other hand, massive particle spheres exists for $r_S > 3M$, otherwise $\E^2/m^2$ is negative. There is a photon sphere at $r_S = 3M$, so massive particle spheres are located outside the photon sphere, which is a physically reasonable.

We also emphasize the need to have only one massive particle sphere  to prove the theorem. However, the result of the theorem suggests that the entire spacetime $\mathcal{M}_{\text{ext}}$ is sliced by the massive particle spheres, each with distinct energy. Indeed, by virtue of (\ref{eq:hhh}) all spatial slices $\alpha={\text{const}}$ are totally umbilic and have a constant mean curvature and lapse function $\varphi^{-1}=n^\alpha\LCM_{\alpha} \alpha$. 
These slices represent a massive particle sphere when we additionally demand only that Eq. (\ref{eq:chi_s}) admits a real solution for $\E$, as it will automatically remain constant on the slice. From our previous discussion, it is clear that such a solution will exist for all slices at $r_S > 3M$.  Future inquiries may find it intriguing to explore the flows of massive particle surfaces, parameterized by particle energy, rather than adhering to a regular foliation $\alpha={\text{const}}$. Such a shift could potentially weaken several technical assumptions of the theorem.

\section{Conclusions}

In this paper, we have established (under some technical assumptions) that a four-dimensional static vacuum asymptotically flat spacetime admitting a massive particle sphere is isometric to the Schwarzschild spacetime. This broadens the scope of uniqueness theorems applicable to static vacuum asymptotically flat spacetimes containing regular event horizons or photon spheres, now encompassing a more general case of massive particles. Moreover, this result allows further generalization to other theories like Einstein-scalar, Einstein-Maxwell, and Einstein-Maxwell-dilaton-axion theories and others.

It is worth noting that the new theorem offers the possibility of extending the uniqueness theorem to encompass strongly naked singularities \cite{Virbhadra:2002ju}, wherein neither a photon sphere nor a horizon exists, but a massive particle sphere is present. For instance, in a superextreme electro-vacuum spacetime, the massive particle sphere exists in a broader range of parameters compared to the photon sphere and, particularly, can be detected in close proximity to a strongly naked singularity \cite{Kobialko:2022uzj}.

While the assumption of a regular foliation is just technical \cite{Yazadjiev:2015hda,Cederbaum:2015fra}, the constancy of the lapse function and the static nature of the sphere plays a key role in the proof of the uniqueness theorem. Recent work has explored the notion of equipotential surfaces as a potential dynamic alternative to static sphere \cite{Cederbaum:2019rbv}. However, whether solely relying on the concept of a massive particle surface is sufficient for the uniqueness theorems, remains uncertain. Unlike unique photon surfaces, massive particle surfaces form entire flows (for varying energies) that extend to infinity, passing in asymptotic spheres. Analysis of such surface flows can provide additional information and advances in this area of research.

In addition, there is considerable interest in the prospect of extending the result to stationary spacetime, where there is a suitable geometric definition of the surfaces of massive particles \cite{Bogush:2023ojz}. Such generalizations could expand understanding of the role of massive particle surfaces and hidden symmetries in the discussion of uniqueness.

\begin{acknowledgments}
KK and DG acknowledge the support of the Russian Science Foundation under Contract No. 23-22-00424.
\end{acknowledgments}

\appendix

\section{Proposition}
\label{AppA}

\begin{proposition} \label{prop:killing}
Let the Killing vector field $k^\alpha$ be everywhere tangent ($k^\alpha n_\alpha =0$) to the hypersurface $S$, then 
\begin{align}
\Lie_k n_\alpha =0, \quad \Lie_k h_{\alpha\beta}=0, \quad \Lie_k  \chi_{\alpha\beta}=0. 
\end{align}
\end{proposition}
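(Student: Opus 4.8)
The guiding principle is that the flow generated by $k^\alpha$ is a one-parameter group of isometries which, because $k^\alpha$ is tangent to $S$ (i.e. $k^\alpha n_\alpha=0$), maps $S$ onto itself; an isometry preserving a hypersurface must preserve its unit normal, and hence every quantity built intrinsically from $g_{\alpha\beta}$ and $n^\alpha$. The plan is to make this rigorous by first establishing $\Lie_k n_\alpha=0$, and then obtaining the remaining two identities as algebraic consequences.

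For the normal, I would argue as follows. Since $k^\alpha$ is Killing, $\Lie_k g_{\alpha\beta}=0$, so $\Lie_k$ commutes with lowering indices and $\Lie_k n_\alpha = g_{\alpha\beta}\Lie_k n^\beta$. Taking $\Lie_k$ of the normalization $n^\alpha n_\alpha=1$ then gives $n^\alpha\Lie_k n_\alpha=0$, so $\Lie_k n_\alpha$ has no component along $n^\alpha$. To control the tangential part, I would use that the vector fields tangent to $S$ close under the Lie bracket: if $v^\alpha$ is tangent to $S$, then so is $[k,v]^\alpha=\Lie_k v^\alpha$, precisely because $k^\alpha$ is itself tangent to $S$. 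Applying $\Lie_k$ to the identity $v^\alpha n_\alpha=0$ (which holds for every tangent $v^\alpha$) and using $(\Lie_k v^\alpha)n_\alpha=0$ leaves $v^\alpha\Lie_k n_\alpha=0$ for all tangent $v^\alpha$; hence $\Lie_k n_\alpha$ is purely normal. Combined with the previous conclusion that it is orthogonal to $n^\alpha$, the two statements force $\Lie_k n_\alpha=0$.

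The other two identities then follow quickly. For $h_{\alpha\beta}=g_{\alpha\beta}-n_\alpha n_\beta$ one simply computes $\Lie_k h_{\alpha\beta}=\Lie_k g_{\alpha\beta}-(\Lie_k n_\alpha)n_\beta-n_\alpha\Lie_k n_\beta=0$, using $\Lie_k g_{\alpha\beta}=0$ and the first identity (the same reasoning gives $\Lie_k h_\alpha{}^\mu=0$). For the second fundamental form, written in its general geometric form $\chi_{\alpha\beta}=h_\alpha{}^\mu h_\beta{}^\nu\LCM_\mu n_\nu$, I would invoke the standard fact that for a Killing field the Lie derivative commutes with the Levi-Civita connection, $\Lie_k\LCM_\mu T=\LCM_\mu\Lie_k T$ on any tensor $T$, because the isometry flow preserves the connection coefficients ($\Lie_k\Gamma^\lambda_{\mu\nu}=0$, equivalently the Killing identity expressing $\LCM\LCM k$ through the Riemann tensor). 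Then $\Lie_k(\LCM_\mu n_\nu)=\LCM_\mu\Lie_k n_\nu=0$, and since $\Lie_k h_\alpha{}^\mu=\Lie_k h_\beta{}^\nu=0$ as well, the Leibniz rule makes every term vanish, yielding $\Lie_k\chi_{\alpha\beta}=0$.

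I expect the genuine content to be concentrated entirely in the first identity $\Lie_k n_\alpha=0$, and within it the step isolating the tangential component, where the tangency hypothesis $k^\alpha n_\alpha=0$ is indispensable. The one subtlety to handle with care is that $n^\alpha$ is a priori defined only on $S$, so $\Lie_k n_\alpha$ must be interpreted intrinsically along $S$; this causes no difficulty precisely because $k^\alpha$ is tangent to $S$ and its flow therefore never leaves $S$. Once $\Lie_k n_\alpha=0$ and the commutation of $\Lie_k$ with $\LCM$ are in place, the identities for $h_{\alpha\beta}$ and $\chi_{\alpha\beta}$ are routine.
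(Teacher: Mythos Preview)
Your proof is correct and follows the same overall decomposition as the paper (normal and tangential parts of $\Lie_k n_\alpha$, then $h$ and $\chi$ as corollaries), but the mechanisms you invoke in two places differ from the paper's. For the tangential component of $\Lie_k n_\alpha$, you use the integral form of Frobenius, namely that tangent vector fields to $S$ close under the Lie bracket, and then differentiate $v^\alpha n_\alpha=0$; the paper instead computes $h^\alpha{}_\gamma\Lie_k n_\alpha$ directly and reduces it to $-k^\sigma h^\alpha{}_\gamma h^\beta{}_\sigma \LCM_{[\alpha}n_{\beta]}$, which vanishes by the differential form of Frobenius (the involutive property of $n_\alpha$). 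For $\chi_{\alpha\beta}$, you write $\chi=h\cdot h\cdot\LCM n$ and use the Killing commutation $[\Lie_k,\LCM]=0$, whereas the paper writes $\chi_{\alpha\beta}=\tfrac12\Lie_n h_{\alpha\beta}$ and uses the commutator identity $[\Lie_k,\Lie_n]=\Lie_{[k,n]}$ together with $\Lie_k n^\alpha=0$. Your route is arguably more conceptual and avoids the explicit involutive computation; the paper's route stays entirely within Lie derivatives and does not need to appeal to the preservation of the Levi-Civita connection under isometries. Both are standard and equally rigorous.
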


\begin{proof}
Calculate the Lie derivative of the normal covector
\begin{align}
\Lie_k n_\alpha = k^\beta \LCM_\beta n_\alpha + n_\beta  \LCM_\alpha  k^\beta.
\end{align}
The projection of this equation onto the normal $n^\alpha$ reads
\begin{align}
n^\alpha \Lie_k n_\alpha =n^\alpha  k^\beta \LCM_\beta n_\alpha + n^\alpha  n_\beta  \LCM_\alpha  k^\beta=\frac{1}{2} k^\beta \LCM_\beta (n^\alpha n_\alpha ) +  n^\alpha  n^\beta \LCM_\alpha  k_\beta=0,
\end{align}
by virtue of the Killing equations $\LCM_{(\alpha} k_{\beta)}=0$ and normalization of the normal vector on the surface $n^\alpha n_\alpha = 1$. The tangent projection reads
\begin{align}
h^\alpha_\gamma \Lie_k n_\alpha = k^\beta h^\alpha_\gamma \LCM_\beta n_\alpha + n_\beta  h^\alpha_\gamma \LCM_\alpha  k^\beta=k^\beta h^\alpha_\gamma \LCM_\beta n_\alpha -k^\beta h^\alpha_\gamma \LCM_\alpha n_\beta=-k^\sigma  h^\alpha_\gamma h^\beta_\sigma \LCM_{[\alpha} n_{\beta]}=0,
\end{align}
where we used the relation
\begin{align}
0=h^\alpha_\gamma \LCM_\alpha (k_\beta n^\beta)=n^\beta h^\alpha_\gamma \LCM_\alpha k_\beta +k_\beta h^\alpha_\gamma \LCM_\alpha n^\beta,
\end{align}
and the involutive property $h^\alpha_\gamma h^\beta_\sigma \LCM_{[\alpha} n_{\beta]}=0$. Thus, the expression $\Lie_k n_\alpha =0$ is proved. We also derive the following straightforward yet valuable corollaries
\begin{subequations}
\begin{align}
\Lie_k n^\alpha &=g^{\alpha\beta}\Lie_k n_\beta =0, \\
\Lie_k h_{\alpha\beta}&=\Lie_k (g_{\alpha\beta}-n_\alpha n_\beta)=- n_{(\alpha} \Lie_k n_{\beta)}=0, \\
\Lie_k  \chi_{\alpha\beta}& =  \frac{1}{2}\Lie_k \Lie_n h_{\alpha\beta}= \frac{1}{2}\Lie_n\Lie_k  h_{\alpha\beta}+\frac{1}{2}\Lie_{\Lie_k n} h_{\alpha\beta}=0.
\end{align}
\end{subequations}
Calculating the Lie derivative of equations (\ref{eq:h_chi}) we immediately find that $H$ is also static:
\begin{align}
\Lie_k H =0. 
\end{align}
\end{proof}

\section{Calculations} 
\label{AppB}
Here, we give explicit expressions for $\omega$ and its derivative along $n^\alpha$. First, from $\LCS_\alpha \alpha =0$ follows the expression
\begin{align} \label{eq:app_1}
\omega=(-\alpha^{-1} n^\alpha\bar{\LCM}_\alpha U)^{1/2}=\left(\frac{4}{(1+\alpha)^2}n^\alpha\LCM_\alpha \alpha\right)^{1/2}, 
\end{align}
where relations $\tilde{\LCM}_{\alpha}U=\bar{\LCM}_{\alpha}U=n_\alpha n^\beta \bar{\LCM}_{\beta} U$ and $\tilde{g}^{\alpha\beta}=\alpha^{-2}\bar{g}^{\alpha\beta}$ are used. Then, the remaining derivatives read
\begin{subequations}
\begin{align}\label{eq:app_2_a}
n^\beta \bar{\LCM}_\beta U &=- \frac{4\alpha}{(1+\alpha^2)^2}n^\beta \bar{\LCM}_\beta\alpha, \\ \label{eq:app_2_b}
n^\beta \bar{\LCM}_\beta\omega&=\frac{1}{2} (-\alpha^{-1} n^\alpha\bar{\LCM}_\alpha U)^{-1/2} \left(-\frac{16 \alpha  (n^\alpha\LCM_\alpha \alpha)^2}{\left(\alpha ^2+1\right)^3}-\frac{8 H}{\left(\alpha ^2+1\right)^2}n^\alpha\LCM_\alpha \alpha\right),
\end{align}
\end{subequations}
where we used identity $n^\alpha \bar{\LCM}_\alpha (n^\beta \bar{\LCM}_\beta \alpha)=-2Hn^\gamma \LCM_\gamma\alpha$.
Using equations (\ref{eq:app_1}), (\ref{eq:app_2_a}), (\ref{eq:app_2_b}) and identity (\ref{eq:chi_s}), some algebraic calculations lead us to the result (\ref{eq:int}).

\bibliography{main}

\end{document}